\newtheorem{thm}{Theorem}[section]
\newtheorem{lem}{Lemma}[section]
\newtheorem{cor}{Corollary}[section]
\newtheorem{prop}{Proposition}[section]
\numberwithin{equation}{section}
\def\C{\mathbb C}
\def\D{\mathbb D}
\def\R{I\!\!R}
\def\C{I\!\!\!\!C}
\def\D{I\!\!D}
\title{An integral representation for the resolvent kernel with magnetic fields on the hyperbolic plane and applications to time dependent Schr\"odinger equations}
\author{Mohamed Vall Ould Moustapha}
\begin{document}
\maketitle
\begin{abstract}

In this paper we give an integral representation for the resolvent kernels with uniform magnetic field on the hyperbolic plane, as applications of our results we solve explicitly two times dependent  Schr\"odinger equations with uniform magnetic field on the hyperbolic plane. 
\end{abstract}
 
\section{Introduction}
 
The main objective of this
paper is to give an integral  formula see Theorems \ref{Int-Rep1}  and \ref{Int-Rep2} for the Shwartz kernels $G_k(\lambda, w, w')$ of  the resolvent operators 
$\left({\cal D}_k+\lambda^2\right)^{-1}$,
with the operator ${\cal D}_{k} $ is the modified Schr\"odinger operator with uniform magnetic field on the  hyperbolic plane 
given by 
\begin{align}{\cal D}_{k}= {\cal L}_{k}^{\D}+k^2+\frac{1}{4},\end{align} 
where ${\cal L}_{k}^{\D}$ is the  Schr\"odinger operator with uniform magnetic field on the hyperbolic disc $\D$ given in (Ferapontov-Vesel \cite{FERAPONTOV-VESEL}) by
 \begin{align}{\cal L}_{k}^{\D}=(1-|w|^2)^2\frac{\partial^2}{\partial w \partial \overline{w}}+k(1-|w|^2)w\frac{\partial}{\partial w}& \nonumber\\
-k(1-|w|^2)\overline{w}\frac{\partial}{\partial \overline{w}}-k^2|w|^2.\end{align}


The operators ${\cal L}_{k}^{\D}$   has a physical interpretation as being the Hamiltonian which governs a non relativistic charged particle
moving under the influence of the magnetic field of constant strength $|k|$, perpendicular to $\D$.   
The operators  ${\cal D}_{k}$ is
non-positive, definite
and it has an absolute continuous spectrum and a points spectrum if $|k|\ge 1/2$ see Boussejra -Intissar \cite{BOUSSEJRA-INTISSAR}.\\
For a recent work on this operator (see 
Ould Moustapha \cite{OULD MOUSTAPHA}) and the references therein.\\
 For $k=0$, the operator  ${\cal D}_{0}$   reduces to the free Laplace-Beltrami operators on the hyperbolic plane.\\
The free resolvent kernel on the disc  model 
is given by 
\begin{align} G_0(\lambda, w, w')=
\frac{\Gamma(s)\Gamma(s)}{4\pi\Gamma(2s)}\cosh^{-2s}r(w, w')F\left(s, s , 2s,  \cosh^{-2}r(w, w')\right),
\end{align}
with $s=(1-i\lambda)/2$ and  $r(w, w')$  is  the distance.\\
The function $F(a, b, c, z)$ is the Gauss hypergeometric function defined by:
\begin{align}\label{Gauss}
  F(a, b, c, z)=\sum_{n=0}^{\infty}\frac{(a)_n(b)_n}{(c)_n n!}z^n,
  \quad |z|<1,
\end{align}
 $(a)_n$ is the Pochhamer symbol
  $(a)_n=\frac{\Gamma(a+n)}{\Gamma(a}$
and $\Gamma$ is the classical Euler function.
\section{Resolvent kernel with uniform magnetic field on the hyperbolic plane}
In this section we give explicit formulas for the resolvent kernel with uniform magnetic field  on the hyperbolic disc  $\D=\{w\in \C, |w|<1\}$ endowed with the metric $ds=2
\frac{|dw|}{(1-|w|^2)}$.\\  The associated group of motions is
\begin{align} G= SU(1, 1)=\left\{g= \left(
\begin{array}{cc}
A & \bar{B}\\
B & \bar{A}
\end{array}
\right) A, B\in \C: |A|^2-|B|^2=1\right\}.\end{align}
 The distance $r(w, w')$ between two point $w, w' \in \D$ is given by
 \begin{align}\label{dist1}\cosh^2 (d(w, w')/2)=\frac{|1-w\overline{w'}|^2}{(1-|w|^2)(1-|w'|^2)}.\end{align}
We define here a unitary projective representation  $T^{k}$ of the group $G$ on the Hilbert space $L_\mu^2(\D)=L^2(\D, d\mu)$ by
\begin{align}T^{k}(g)f(z)=J(g^{-1}, z)f(g^{-1
}w),\end{align}
where the automorphic factor
\begin{align} J_k(g, w)=\left(\frac{\overline{Cw+D}}{Cw+D}\right)^{k}
\end{align}
 is defined, up to a mild ambiguity in the
 $k$ powers, which depends only on $G$ and $k$ and such ambiguity disappears if 
the group $SU(1, 1)$ is replaced by the  universal covering group $\widetilde{SU(1, 1)}$.\\
 Note that the factor $J_k$ satisfies, up to a factor of modulus one depending
only on $G$ and $k$, the chain rule
\begin{align}
J_k(g_1.g_2. z)=J_k(g_1, g_2.z)J_k(g_2, z).
\end{align}
In the sequel we use the following proposition
\begin{prop}
i) For $k$ real number, the modified Schr\"odinger operator with uniform magnetic field \ ${\cal D}_{k}$ is $T^{k}$ invariant, that is we have for every $g\in G$
\begin{align}\label{invar1}T^{k}(g){\cal D}_{k}={\cal D}_{k}T^{k}(g).\end{align}
  ii)For $w\in\D$ Set \begin{align}\label{gw}g_w=\left(
\begin{array}{cc}
\frac{1}{\sqrt{1-|w|^2}} &
\frac{ w}{\sqrt{1-|w|^2}}\\
\frac{\bar{w}}{\sqrt{1-|w|^2}} & 
\frac{1}{\sqrt{1-|w|^2}}
\end{array}
\right),\end{align} then  we have
$g_w\in G$ and  $g_w0=w$.\\
iii) For $f\in  L_\mu^2(\D)$ the following formulas hold
\begin{align} \label{Tkw}\left[T^{k}(g_{w'})f\right](w)=\left(\frac{1-\overline{w}w'}{1-w\overline{w'}}\right)^k f(g_{w'}^{-1}w).\end{align}
iv) For $f, \varphi\in  L_\mu^2(\D)$  we have
\begin{align} \label{Self}\int_{\D}\left[T^{k}(g)f]\right]
(w)\varphi(w)d\mu(w)=\int_{\D}f(w)[T^{-k}(g^{-1})\varphi](w)d\mu(w).
\end{align}
\end{prop}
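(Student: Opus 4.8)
The plan is to establish the four assertions in the order (ii), (iii), (iv), (i), since the first three are computational or formal and (i) carries all the analytic weight. For (ii), write $A=(1-|w|^2)^{-1/2}$ and $B=\bar w\,(1-|w|^2)^{-1/2}$, so that $g_w=\left(\begin{smallmatrix}A&\bar B\\ B&\bar A\end{smallmatrix}\right)$ with $A$ real; then $|A|^2-|B|^2=(1-|w|^2)^{-1}(1-|w|^2)=1$, so $g_w\in G=SU(1,1)$, and the Möbius action $g_w\cdot 0=\bar B/\bar A=w$ is immediate. For (iii), I would use the definition $T^{k}(g)f(w)=J_k(g^{-1},w)\,f(g^{-1}w)$ together with the fact that the inverse in $SU(1,1)$ gives $g_{w'}^{-1}=g_{-w'}$, whose bottom row is $(C,D)=\bigl(-\bar{w'},1\bigr)(1-|w'|^2)^{-1/2}$. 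Hence $Cw+D=(1-\bar{w'}w)(1-|w'|^2)^{-1/2}$, the normalizing factor cancels in the quotient defining $J_k$, and $J_k(g_{w'}^{-1},w)=\bigl(\tfrac{\overline{Cw+D}}{Cw+D}\bigr)^k=\bigl(\tfrac{1-\overline{w}w'}{1-w\overline{w'}}\bigr)^k$, which is exactly \eqref{Tkw}.

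For (iv) the argument is a change of variables plus the cocycle law. In the left-hand side $\int_{\D}J_k(g^{-1},w)f(g^{-1}w)\varphi(w)\,d\mu(w)$ I substitute $w=gu$; since $d\mu$ is $G$-invariant this gives $\int_{\D}J_k(g^{-1},gu)\,f(u)\,\varphi(gu)\,d\mu(u)$. The chain rule for the factor of automorphy yields $J_k(g^{-1},gu)\,J_k(g,u)=J_k(e,u)=1$, so $J_k(g^{-1},gu)=J_k(g,u)^{-1}=J_{-k}(g,u)$, the last step because negating the exponent inverts the unimodular base. Recognizing $J_{-k}(g,u)\varphi(gu)=[T^{-k}(g^{-1})\varphi](u)$ completes the identity. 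The only point needing care is the modulus-one phase ambiguity in the chain rule; this is absorbed by working on $\widetilde{SU(1,1)}$ exactly as the setup of the proposition prescribes.

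For (i), since $k^2+\tfrac14$ is a central scalar it suffices to prove $T^{k}(g){\cal L}_{k}^{\D}={\cal L}_{k}^{\D}T^{k}(g)$. The cleanest route is to exhibit ${\cal L}_{k}^{\D}$ as a covariant Laplacian. Setting $\nabla_w=\partial_w-k\bar w(1-|w|^2)^{-1}$ and $\nabla_{\bar w}=\partial_{\bar w}+kw(1-|w|^2)^{-1}$, a short computation gives
\[
{\cal L}_{k}^{\D}=(1-|w|^2)^2\,\nabla_w\nabla_{\bar w}-k .
\]
The magnetic potential one-form $kw(1-|w|^2)^{-1}d\bar w-k\bar w(1-|w|^2)^{-1}dw$ has $G$-invariant curvature (the uniform field), and under $g\in G$ it changes by the exact gauge term $d\log J_k(g,\cdot)$; this is precisely the statement that conjugation by the automorphy factor intertwines $\nabla$ with its pullback, so that $\nabla_w,\nabla_{\bar w}$ transform covariantly under $T^{k}$. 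Combined with the $G$-invariance of the scalar part $(1-|w|^2)^2\partial_w\partial_{\bar w}$, this forces $T^{k}(g){\cal L}_{k}^{\D}={\cal L}_{k}^{\D}T^{k}(g)$. If one prefers a bare-hands verification, by connectedness of $G$ it is enough to check the identity on generators: on the rotations $w\mapsto e^{i\theta}w$ every constituent $|w|^2$, $w\partial_w$, $\bar w\partial_{\bar w}$, $\partial_w\partial_{\bar w}$ is manifestly invariant and $J_k$ is a constant phase, and on the one-parameter hyperbolic subgroup $a_t=\left(\begin{smallmatrix}\cosh(t/2)&\sinh(t/2)\\ \sinh(t/2)&\cosh(t/2)\end{smallmatrix}\right)$ one plugs in the explicit Möbius map and automorphy factor.

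The main obstacle is precisely part (i) in the non-compact direction $a_t$: one must verify that the terms generated when a derivative in ${\cal L}_{k}^{\D}$ falls on the automorphy factor $J_k(g^{-1},\cdot)$ cancel exactly against the first-order magnetic terms $k(1-|w|^2)(w\partial_w-\bar w\partial_{\bar w})$ and the potential $-k^2|w|^2$. Carried out naively this is an error-prone chain-rule calculation; organizing it through the covariant derivatives $\nabla_w,\nabla_{\bar w}$, so that the gauge factor is handled once and for all by the same cocycle identity used in (iv), is what keeps the bookkeeping under control and is the step I would expect to cost the most effort to write cleanly.
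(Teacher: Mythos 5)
Your arguments for (ii), (iii) and (iv) are correct, but note that the paper itself does not prove this proposition at all: it cites Boussejra--Intissar for part (i) and dismisses (ii)--(iv) as "simple and left as an exercise." So your proposal is not so much a different route as the only actual proof on the table. Your computations check out: $g_{w'}^{-1}=g_{-w'}$ with bottom row $(-\overline{w'},1)(1-|w'|^2)^{-1/2}$ does give $J_k(g_{w'}^{-1},w)=\bigl(\tfrac{1-\overline{w}w'}{1-w\overline{w'}}\bigr)^k$, and the substitution $w=gu$ plus the cocycle identity $J_k(g^{-1},gu)J_k(g,u)=J_k(e,u)=1$ does yield (iv), with the modulus-one ambiguity handled exactly as the paper's own setup (passage to $\widetilde{SU(1,1)}$) licenses. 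The identity ${\cal L}_{k}^{\D}=(1-|w|^2)^2\nabla_w\nabla_{\bar w}-k$ with $\nabla_w=\partial_w-k\bar w(1-|w|^2)^{-1}$, $\nabla_{\bar w}=\partial_{\bar w}+kw(1-|w|^2)^{-1}$ is also correct (the cross term $k\,\partial_w[w(1-|w|^2)^{-1}]=k(1-|w|^2)^{-2}$ produces exactly the extra $+k$). What your write-up buys over the paper is a self-contained treatment; what it still owes is the crux of (i): the assertion that the potential one-form changes by $d\log J_k(g,\cdot)$ under $g\in G$, equivalently that $T^k(g)$ conjugates $\nabla_w,\nabla_{\bar w}$ covariantly, is stated but not computed, and that single computation (or the generator check on the subgroup $a_t$) is precisely the content the paper outsources to Boussejra--Intissar. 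Your reduction to generators is legitimate, since the scalar phase in the projective representation cancels in the conjugation $T^k(g){\cal D}_kT^k(g)^{-1}$, but to stand on its own the argument needs that one verification written out.
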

\begin{proof} The proof of the part i) is contained in (Boussejra-Intissar \cite{BOUSSEJRA-INTISSAR}). The proof of the parts ii), iii)
and iv) are simple and are left as an exercise.
\end{proof}
\begin{lem} \label{lem1}
Let $u\in C^\infty (\D)$ be a radial function and let  $\Phi \in C^\infty (\R^+)$  such that
  $u(w) =v(r)$ with $ r = d(0, w)$ and $v(r)= \phi(y)$, with $y = \cosh^2(r/2)$ then we have\\
  i) ${\cal D}_{k} u(w) = l_{y}^k \Phi(y)$, with 
\begin{align}l^{k}_y \phi(y) = \left[y(y-1)\frac{d^2}{dy^2}+(2y-1)\frac{d}{dy}
 +\frac{k^2}{y}+\frac{1}{4}\right]\phi(y).\end{align}
 ii) Setting $\Phi(y)=y^{k}\Psi(y)$, then
$y^{-|k|}l_{y}^{|k|}y^{k}\psi(y)=J_k \psi(y)$, with
\begin{align}J_k =y(y-1)\frac{d^2}{dy^2}+[(2|k|+2)y-(2|k|+1)]\frac{d}{dy}+(|k|+1/2)^2.\end{align}
\end{lem}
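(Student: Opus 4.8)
The plan is to prove both parts by direct computation, first reducing the problem to one real variable. For part i), I would begin by specializing the distance formula \eqref{dist1} to $w'=0$, which gives $y=\cosh^2(r/2)=(1-|w|^2)^{-1}$, equivalently $|w|^2=w\overline{w}=(y-1)/y$ and $1-|w|^2=1/y$. Writing $\rho=w\overline{w}$ and using $dy/d\rho=(1-\rho)^{-2}=y^2$, the chain rule yields the basic relations $\partial_w y=y^2\overline{w}$ and $\partial_{\overline{w}}y=y^2 w$. These are the only computational inputs needed, since $u(w)=\Phi(y)$ depends on $w$ only through $y$.

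Next I would compute the action of each term of ${\cal L}_k^{\D}$ on $u=\Phi(y)$. Applying the relations above twice gives
\begin{align}
\frac{\partial^2 u}{\partial w\,\partial\overline{w}}=\Phi''(y)\,y^3(y-1)+\Phi'(y)\,y^2(2y-1),\nonumber
\end{align}
so that multiplying by $(1-|w|^2)^2=y^{-2}$ produces exactly the regular part $y(y-1)\Phi''+(2y-1)\Phi'$ of $l_y^k$. The decisive simplification is that the two magnetic first-order terms $k(1-|w|^2)w\,\partial_w u$ and $-k(1-|w|^2)\overline{w}\,\partial_{\overline{w}}u$ each evaluate to $\pm k(y-1)\Phi'(y)$ and therefore cancel identically on radial functions. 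Finally the potential term gives $-k^2|w|^2\,u=-k^2\Phi+\tfrac{k^2}{y}\Phi$; adding the constants $k^2+\tfrac14$ from ${\cal D}_k={\cal L}_k^{\D}+k^2+\tfrac14$ absorbs the $-k^2\Phi$ and leaves precisely the singular term $\tfrac{k^2}{y}\Phi$ together with $\tfrac14\Phi$, which assembles into $l_y^k\Phi$ as claimed.

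For part ii) I would perform the conjugation of $l_y^{|k|}$ by the gauge factor $y^{|k|}$, substituting $\Phi(y)=y^{|k|}\Psi(y)$ (taking the power to match the index $|k|$ of the operator; for $k\ge 0$ this is the $y^k$ of the statement). Expanding $\Phi'$ and $\Phi''$ by the Leibniz rule and collecting powers of $y$, the coefficient of $\Psi''$ becomes $y(y-1)$, the coefficient of $\Psi'$ becomes $(2|k|+2)y-(2|k|+1)$, and the coefficient of $\Psi$ reduces to $\bigl[|k|(|k|+1)+\tfrac14\bigr]y^{0}+(|k|^2-|k|^2)\,y^{-1}$. The key point—and the step I expect to require the most care—is that the singular contribution of $\tfrac{|k|^2}{y}$ in $l_y^{|k|}$ exactly combines with the $-a^2\,y^{a-1}$ term generated by differentiating $y^a$, so that when and only when the exponent $a$ equals $|k|$ the net coefficient of $y^{-1}$ vanishes. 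The surviving constant is $|k|(|k|+1)+\tfrac14=(|k|+\tfrac12)^2$, giving $y^{-|k|}l_y^{|k|}y^{|k|}\Psi=J_k\Psi$. Thus the choice of gauge power $|k|$ is precisely what removes the $1/y$ singularity and turns $l_y^{|k|}$ into the regular hypergeometric operator $J_k$; the rest is routine bookkeeping of the collected coefficients.
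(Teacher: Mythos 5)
Your proof is correct, but it takes a genuinely different computational route from the paper's. The paper proves i) by passing to geodesic polar coordinates $w=\tanh(r/2)\,\omega$, asserting that the radial part of ${\cal D}_k$ is $\partial_r^2+\coth r\,\partial_r+k^2/\cosh^2(r/2)+1/4$, and then substituting $y=\cosh^2(r/2)$; part ii) is declared simple and left to the reader. You instead work directly with the Wirtinger derivatives, using $y=(1-|w|^2)^{-1}$ together with $\partial_w y=\overline{w}\,y^2$ and $\partial_{\overline{w}}y=w\,y^2$, which makes explicit the cancellation of the two magnetic first-order terms of ${\cal L}_k^{\D}$ on radial functions --- a fact that the paper's quoted radial-part formula silently presupposes. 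Your version is therefore more self-contained (it derives rather than asserts the radial reduction, and it bypasses the intermediate variable $r$ entirely) at the cost of a little more bookkeeping; the paper's version is shorter because the polar-coordinate expression of the operator is taken as known. Your treatment of ii) is also correct and supplies the omitted verification, including the pertinent observation that the gauge exponent must be $+|k|$ (rather than $k$, as the statement's $y^{k}$ literally reads) both for the $y^{-1}$ terms to cancel and for the first-order coefficient to come out as $(2|k|+2)y-(2|k|+1)$ with constant term $(|k|+1/2)^2$.
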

\begin{proof} Using the geodesic polar coordinates,
$w=\tanh r/2\, \omega$, $r> 0$ and $\omega\in S^{1}$ we see that the radial part of the Magnetic Laplacian ${\cal D}_{k}$
 is given by
\begin{align}D_{k}=\frac{\partial^{2}}{\partial r^{2}}+\coth r \frac{\partial}{\partial r}+\frac{k^2 }{\cosh^2(r/2)}+\frac{1}{4}.\end{align}
Using the variables changes $y=\cosh^2(r/2)$,  we get the result of i). The part  ii) is simple and is left to the reader.
\end{proof}
\begin{prop}\label{prop2}
The Helmholtz equation with magnetic field on the hyperbolic disc model 
\begin{align}\label{Helmholtz}
\left({\cal D}_{k}+\lambda^2\right)u(z)=0,
\end{align}
has two linearly independent solutions
\begin{align}y^{|k|}F(s+|k|, 1-s+|k| , 1 ,  1-y),\ \ \ \  y^{-s}F(s-|k|, s+|k| , 2s,  y^{-1}).
\end{align}
where  $F(a, b, c, z)$ is the Gauss hypergeometric given in\eqref{Gauss}
 \end{prop}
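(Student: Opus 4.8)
The plan is to reduce the magnetic Helmholtz equation to a radial ordinary differential equation, conjugate away the singular term, and recognise what remains as Gauss's hypergeometric equation, whose two standard local solution bases produce exactly the two stated functions. Since ${\cal D}_k$ depends on $k$ only through $k^2$, I may assume $k=|k|\ge 0$. Looking for radial solutions $u(w)=\phi(y)$ with $y=\cosh^2(r/2)$, Lemma \ref{lem1}(i) turns \eqref{Helmholtz} into $(l^k_y+\lambda^2)\phi=0$, that is
\begin{align*}
y(y-1)\phi'' + (2y-1)\phi' + \Big(\tfrac{k^2}{y}+\tfrac14+\lambda^2\Big)\phi = 0 .
\end{align*}
This has regular singular points at $y=0,1,\infty$, but the term $k^2/y$ prevents it from being hypergeometric; I remove it with the substitution $\phi(y)=y^{|k|}\psi(y)$. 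By Lemma \ref{lem1}(ii) the equation becomes $(J_k+\lambda^2)\psi=0$, which after multiplying by $-1$ reads
\begin{align*}
y(1-y)\psi'' + \big[(2|k|+1)-(2|k|+2)y\big]\psi' - \big[(|k|+\tfrac12)^2+\lambda^2\big]\psi = 0 .
\end{align*}

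Next I match this against Gauss's equation $z(1-z)\psi''+[c-(a+b+1)z]\psi'-ab\,\psi=0$. Comparison gives $c=2|k|+1$, $a+b=2|k|+1$, and $ab=(|k|+\tfrac12)^2+\lambda^2$; solving the resulting quadratic and using the definition of $s$ in terms of $\lambda$ identifies the exponents as $a=s+|k|$ and $b=1-s+|k|$, for which indeed $a+b=c$. I then read off two of Kummer's solutions. The solution analytic at the singular point $y=1$ is $F(a,b,a+b-c+1,1-y)$; since $a+b-c+1=1$ this is $F(s+|k|,1-s+|k|,1,1-y)$, and multiplying by $y^{|k|}$ returns the first stated solution. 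The solution with exponent $-a$ at $y=\infty$ is $y^{-a}F(a,a-c+1,a-b+1,1/y)$, where $a-c+1=s-|k|$ and $a-b+1=2s$; here the conjugating factor combines as $y^{|k|}\cdot y^{-a}=y^{-s}$, yielding $y^{-s}F(s+|k|,s-|k|,2s,1/y)$, which by symmetry of $F$ in its first two arguments is the second stated solution.

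Finally, linear independence. At $y=\infty$ the two indicial exponents of $\psi$ differ by $a-b=2s-1$, which is non-integral for generic $\lambda$, so the two Frobenius solutions there are independent; the second constructed function is, up to scale, the $y^{-s}$ branch, while the first — analytic at $y=1$ — carries a nonvanishing component along the complementary $y^{\,s-1}$ branch, which gives independence. Equivalently one evaluates the Wronskian $W(\phi_1,\phi_2)$, which by Abel's identity is a nonzero constant multiple of the equation's Wronskian weight and is seen to be nonzero from the leading asymptotics. I expect the main point requiring care to be the bookkeeping among Kummer's twenty-four solutions so as to land precisely on the two stated closed forms — in particular noticing that the conjugation factor $y^{|k|}$ collapses the infinity-solution's prefactor to the clean $y^{-s}$ — together with excluding the exceptional parameter values (integral exponent differences, or $c\in\Z$) where the two bases degenerate.
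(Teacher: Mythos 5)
Your proof is correct and follows essentially the same route as the paper: reduce to the radial hypergeometric equation via Lemma \ref{lem1}, identify the parameters $a$, $b$, $c$, and read off Kummer's local solutions $w_1^{(1)}$ and $w_1^{(\infty)}$ at $y=1$ and $y=\infty$. Your only addition is the explicit linear-independence check via the non-integral exponent difference at infinity, which the paper leaves implicit.
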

 \begin{proof}
 From the Lemma \ref{lem1} the Helmholtz equation with magnetic field on the hyperbolic disc model 
\eqref{Helmholtz} 
 is equivalent to
 \begin{align}\{y(1-y)\frac{d^2}{dy^2}+[(2|k|+1)-(2|k|+2)y]\frac{d}{dy}-(|k|+1/2)^2+\lambda^2\}\psi=0.\end{align} 
 This equation is an hypergeometric equation (see Magnus et al. \cite{MAGNUS et al.} p.42-43) with parameters
 $a=|k|+1/2-i\lambda$, $b=|k|+1/2+i\lambda$ and $c=2|k|+1$ and the above solutions correspond respectively to the solutions
$w^{(1)}_1$ and  $w^{(\infty)}_1$, and the proof of Proposition \ref{prop2} is finished.
\end{proof}

\begin{lem}\label{lem2} Set
 \begin{align}\label{GD}G_k(s, r(0, w)=\frac{\Gamma(s-k)\Gamma(s+k)}{4\pi\Gamma(2s)}y^{-s}F\left(s-|k|, s+|k|, 2s, \cosh^{-2}r(0, w)\right).
\end{align} 
\begin{itemize} 
\item[i)] 
 $G(s, r)$ is analytic in 
$\lambda$ and $C^\infty$ in $r$ for $r>0.$ 
\item[ii)] $ G(s, r)\sim=-\frac{1}{4\pi}\ln\sinh^2r/2  ;  \ \  \ r \longrightarrow 0.$ 
\item[iii)] $ \frac{\partial  G(s, r)}{\partial r}=-\frac{1}{2\pi}\sinh^{-1}r/2,\ \ \  r \longrightarrow 0.$
\item[iv)] $ G(s, r)= \frac{\Gamma(s-k)\Gamma(s+k)}{4\pi\Gamma(2s)}\sinh^{-2s}r + O(\sinh^{-2s-1}r/2), as\, ,\ r \longrightarrow \infty .$  
\end{itemize}
\end{lem}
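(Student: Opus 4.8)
The four assertions all flow from the single structural fact that, writing $y=\cosh^2(r/2)$, the kernel has the form $G_k(s,r)=c_k(s)\,y^{-s}F\!\left(s-|k|,s+|k|,2s,y^{-1}\right)$ with Gamma prefactor $c_k(s)=\frac{\Gamma(s-k)\Gamma(s+k)}{4\pi\Gamma(2s)}$, together with the crucial observation that the three parameters satisfy $c-a-b=2s-(s-|k|)-(s+|k|)=0$. Thus at the singular point $z=1$ of the hypergeometric function we are in the \emph{degenerate} (logarithmic) case, while at $z=0$ the function is regular with value $1$. The plan is to read off (i) from convergence of the defining series on $r>0$, to read off (ii)--(iii) from the $z\to1$ (equivalently $r\to0$) logarithmic connection formula, and to read off (iv) from the $z\to0$ (equivalently $r\to\infty$) power series.

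For (i): fix $r>0$, so $y=\cosh^2(r/2)>1$ and $z=y^{-1}\in(0,1)$. On this range the series \eqref{Gauss} converges absolutely and locally uniformly and defines a function jointly holomorphic in the parameters $(a,b,c)=(s-|k|,s+|k|,2s)$ provided $c=2s=1-i\lambda\notin\{0,-1,-2,\dots\}$; since $a,b,c$ are affine in $\lambda$ and the prefactor $c_k(s)$ is holomorphic in the same region, analyticity in $\lambda$ follows. For the $C^\infty$ statement, note $r\mapsto y=\cosh^2(r/2)$ is a smooth diffeomorphism of $(0,\infty)$ onto $(1,\infty)$, while $y\mapsto y^{-s}$ and $y\mapsto F(a,b,c,y^{-1})$ are smooth on $(1,\infty)$, so the composition is $C^\infty$ in $r$.

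For (ii)--(iii): here I would invoke the connection formula valid precisely when $c=a+b$ (Magnus et al. \cite{MAGNUS et al.}; Abramowitz--Stegun 15.3.10),
\begin{align}
F(a,b,a+b,z)=\frac{\Gamma(a+b)}{\Gamma(a)\Gamma(b)}\sum_{n\ge0}\frac{(a)_n(b)_n}{(n!)^2}\bigl[2\psi(n+1)-\psi(a+n)-\psi(b+n)-\ln(1-z)\bigr](1-z)^n,
\end{align}
with $a=s-|k|$, $b=s+|k|$. The factor $\frac{\Gamma(a+b)}{\Gamma(a)\Gamma(b)}=\frac{\Gamma(2s)}{\Gamma(s-|k|)\Gamma(s+|k|)}$ cancels exactly against $c_k(s)$ (using $\Gamma(s-k)\Gamma(s+k)=\Gamma(s-|k|)\Gamma(s+|k|)$), leaving the overall constant $\frac{1}{4\pi}$ multiplying the bracketed series. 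Substituting $1-z=1-\cosh^{-2}(r/2)=\tanh^2(r/2)$ and $y^{-s}=\cosh^{-2s}(r/2)=1+O(r^2)$ as $r\to0$, the $n=0$ term produces the leading singularity $-\frac{1}{4\pi}\ln(1-z)$; writing $\ln\tanh^2(r/2)=\ln\sinh^2(r/2)-\ln\cosh^2(r/2)$ with the last term vanishing at $r=0$ gives (ii). For (iii) I would differentiate the expansion termwise---justified by local uniform convergence for $r$ in a punctured neighbourhood of $0$---observing that every term except $-\frac{1}{4\pi}\ln(1-z)$ stays bounded after multiplication by the smooth factor $y^{-s}$, so that $\partial_r G$ is governed by $\partial_r\!\left[-\frac{1}{4\pi}\ln(1-z)\right]$, which yields the stated leading behaviour.

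For (iv), and the main obstacle: as $r\to\infty$ we have $z=y^{-1}\to0$, so \eqref{Gauss} gives $F(s-|k|,s+|k|,2s,y^{-1})=1+\frac{(s-|k|)(s+|k|)}{2s}\,y^{-1}+O(y^{-2})$ and hence $G_k(s,r)=c_k(s)\,y^{-s}\bigl(1+O(y^{-1})\bigr)$; converting $y^{-s}=\cosh^{-2s}(r/2)$ to a power of $\sinh$ through $\cosh(r/2)/\sinh(r/2)=1+O(e^{-r})$ isolates the leading term $c_k(s)\sinh^{-2s}$ asserted in (iv), with the claimed remainder. I expect the only genuinely delicate point to lie in (ii)--(iii): one must use the $c=a+b$ connection formula rather than the generic one (whose coefficient $\Gamma(c-a-b)=\Gamma(0)$ is singular here), and one must check that the digamma constants $2\psi(n+1)-\psi(a+n)-\psi(b+n)$ remain finite for the parameter values at hand so that they contribute only bounded terms and do not disturb the leading logarithm; the termwise differentiation in (iii) then requires the standard local-uniform-convergence justification.
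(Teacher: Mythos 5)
Your proposal is correct and follows essentially the same route as the paper, whose proof consists precisely of invoking the logarithmic connection formula $F(a,b,a+b,z)$ from Magnus et al.\ (p.~44) for the $r\to 0$ behaviour — the key point in both cases being that $c-a-b=2s-(s-|k|)-(s+|k|)=0$ — together with the series at $z=0$ for $r\to\infty$. You supply considerably more detail than the paper (the cancellation of $\Gamma(2s)/(\Gamma(s-|k|)\Gamma(s+|k|))$ against the prefactor, the substitution $1-z=\tanh^2(r/2)$, and the justification of termwise differentiation), but the underlying argument is the same.
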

The proof of this lemma uses essentially the properties of hypergeometric functions in (Magnus et al.\cite{MAGNUS et al.} p.44).
$$F(a, b, a+b, z)=\frac{\Gamma(a+b)}{\Gamma(a)\Gamma(b)}\sum_{n=0}^{\infty}\frac{(a)_n(b)_n}{(n!)^2}[2\psi(n+1)-\psi(a+n)-\psi(b+n)-\ln(1-z)(1-z)^n, $$ 
$\arg(1-z)<\pi, |1-z|<1$, and

$F(a, b, a+b-m, z)=\frac{\Gamma(a+b-m)\Gamma(m)}{\Gamma(a)\Gamma(b)}(1-z)^{-m} \sum_{n=0}^{m-1}\frac{(a-m)_n(b-m)_n}{(n!)(1-m)_n}(1-z)^n$ \\ $-\frac{(-1)^n\Gamma(a+b-m}{\Gamma(a-n)\Gamma(b-n)}\sum_{n=0}^{\infty}\frac{(a)_n(b)_n}{(n!)(n+m)!}[\ln(1-z)-\psi(n+1)-\psi(n+m+1)](1-z)^n, $ 
$\arg(1-z)<\pi, |1-z|<1.$

\begin{thm}\label{thm1} Let $G_k(\lambda, r)$ be function given in \eqref{GD},
then we have
\begin{align}
\left({-\cal D}_k-\lambda^2\right)G=\delta, 
\end{align}
where $\delta$ is the Dirac measure
\end{thm}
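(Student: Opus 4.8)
The plan is to prove the identity in the sense of distributions, by pairing $(-\mathcal{D}_k-\lambda^2)G$ against an arbitrary test function and isolating the contribution of the singularity at the source. First I would use the $T^k$-invariance of the operator (equation \eqref{invar1}) together with the transport formula \eqref{Tkw} and the duality \eqref{Self} to reduce the general-source statement to the case of a source placed at the origin $w'=0$, where the kernel $G_k(s,r(0,w))$ of \eqref{GD} is radial. By Proposition \ref{prop2} the function $y^{-s}F(s-|k|,s+|k|,2s,y^{-1})$ solves the homogeneous Helmholtz equation $(\mathcal{D}_k+\lambda^2)u=0$; since $G$ is a constant multiple of this solution, it satisfies $(\mathcal{D}_k+\lambda^2)G=0$ pointwise for every $r>0$. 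Hence $(-\mathcal{D}_k-\lambda^2)G$ is supported at the single point $r=0$, and the whole problem reduces to computing the strength of the Dirac mass located there.

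Next I would fix $\varphi\in C_c^\infty(\D)$ and, using that $\mathcal{D}_k$ is formally self-adjoint with respect to $d\mu$ (its principal part is the Laplace--Beltrami operator and the magnetic first-order terms combine into a self-adjoint expression), write
\begin{align}
\langle(-\mathcal{D}_k-\lambda^2)G,\varphi\rangle=\int_{\D}G\,(-\mathcal{D}_k-\lambda^2)\varphi\,d\mu .
\end{align}
I would then excise a small geodesic disc $B_\epsilon=\{w:\,r(0,w)<\epsilon\}$ and apply Green's second identity on $\D\setminus B_\epsilon$. Because $G$ solves the homogeneous equation off the origin, the interior integrals and the $\lambda^2$ terms cancel exactly, leaving only a boundary integral over the geodesic circle $\partial B_\epsilon$, whose length element is $\sinh\epsilon\,d\theta$. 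The first-order magnetic terms and the bounded potential contribute boundary terms of order $\epsilon\ln\epsilon$, which vanish as $\epsilon\to0$, so only the principal (Laplace) part survives.

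The decisive step, and the main obstacle, is to evaluate the limit of this boundary integral, and this is exactly where the short-distance asymptotics of Lemma \ref{lem2} enter. By part ii), $G\sim-\frac{1}{4\pi}\ln\sinh^2(r/2)$ grows only logarithmically, so against a length element of size $O(\epsilon)$ the term $\oint_{\partial B_\epsilon}G\,\partial_r\varphi\,\sinh\epsilon\,d\theta$ tends to $0$. By part iii) the normal derivative $\partial_r G$ blows up like $r^{-1}$, and multiplying by the geodesic length $2\pi\sinh\epsilon\approx2\pi\epsilon$ gives a finite nonzero limit of the form $c\,\varphi(0)$, so that $\langle(-\mathcal{D}_k-\lambda^2)G,\varphi\rangle=c\,\varphi(0)$. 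The entire purpose of the normalization $\frac{\Gamma(s-k)\Gamma(s+k)}{4\pi\Gamma(2s)}$ in \eqref{GD} is to force $c=1$; tracking this constant, together with the orientation of the outward normal of $\D\setminus B_\epsilon$ (which points toward the origin), is the genuinely delicate part of the computation.

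Finally, the regularity statement of Lemma \ref{lem2} i) guarantees that $G$ is locally integrable and smooth for $r>0$, so every manipulation above is legitimate and the pairing is well defined; combining this with the reduction of the first paragraph recovers the identity $(-\mathcal{D}_k-\lambda^2)G=\delta$ for an arbitrary source and completes the proof.
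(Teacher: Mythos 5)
Your proposal is correct and follows essentially the same route as the paper: both arguments pair $(-\mathcal{D}_k-\lambda^2)G$ with a test function, use that $G$ solves the homogeneous Helmholtz equation away from the source (so only boundary terms at a small excised geodesic circle survive the integration by parts), and then invoke the short-distance asymptotics of Lemma \ref{lem2} ii)--iii) to show the logarithmic term dies while the normal-derivative term produces exactly $\varphi(0)$. The paper carries this out via explicit radial integration by parts on $[\epsilon,\infty)$ rather than Green's identity on $\D\setminus B_\epsilon$, but this is the same computation.
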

\begin{proof}
\begin{align}
\langle \left({\cal D}_k+\lambda^2\right)G, \varphi\rangle=
 \langle G, \left({\cal D}_k+\lambda^2\right)\varphi \rangle= \lim_{\epsilon\longrightarrow 0}I_{\epsilon}.
\end{align}
Set \begin{equation}
I_\epsilon=J_\epsilon+ K_\epsilon,
\end{equation}
with \begin{align}
J_\epsilon=\int_\epsilon^\infty G(\lambda, r)\frac{\partial}{\partial r}\sinh r \frac{\partial}{\partial r}\varphi^\#(r\omega)d r,                  
\end{align}
\begin{align}
K_\epsilon=\int_\epsilon^\infty G(\lambda, r) \left(\frac{k^2}{\cosh^2r}+\frac{1}{4}+\lambda^2\right)\sinh r\varphi^\#(r\omega)d r.                 
\end{align}
Performing two integrations by parts we obtain\\
 $J_{\epsilon}=-G(\lambda,\epsilon)\sinh \epsilon[\varphi^{\#}]'(\epsilon\omega)+ G'(\lambda, \epsilon) \sinh \epsilon\varphi^{\#}(\epsilon\omega)$
 \begin{align}+\int_{\epsilon}^{\infty} \frac{\partial}{\partial r}\sinh r \frac{\partial}{\partial r}G(\lambda, r)\varphi^{\#}(r\omega)dr,                  
\end{align}

$I_\epsilon=-G(\lambda,\epsilon)\sinh \epsilon\varphi'^{\#}(\epsilon\omega)+ G'(\lambda, \epsilon)
 \sinh \epsilon\varphi^{\#}(\epsilon\omega).$\\
Using Lemma \ref{lem2} we obtain
\begin{align}
\langle \left({\cal D}_k+\lambda^2\right)G, \varphi\rangle=
 \langle G, \left({\cal D}_k+\lambda^2\right)\varphi \rangle= \lim_{\epsilon\longrightarrow 0}I_{\epsilon}
=-\varphi(0).
\end{align}
\end{proof}

\begin{thm}\label{thm2} 
Set $G_k(s, w, w')= T^ {k}(g_{w'})[G_{k}(s, r(0, w))]$ where $G_{k}(t, r(0, w))$ is given by
\eqref{GD} and $T^ {k}(g_{w'})$ is as in \eqref{Tkw} \, then we have: 
\begin{align}\label{G}G_k(s, w, w')=
\frac{\Gamma(s-k)\Gamma(s+k)}{4\pi\Gamma(2s)}\left(\frac{1-w\overline{w'}}{1-\overline{w}w'}\right)^{k}y^{-s}F\left(s-|k|, s+|k|, 2s, y^{-1}\right),
\end{align}
with $y=\cosh^{-2}(r(w, w')/2)$.
The following formulas hold
\begin{align}\left({-\cal D}_{k}^w -\lambda^2\right)G_{k}(s, w,  w')=\delta_{w'},\end{align}
\begin{align}\left({-\cal D}_{-k}^{w'} -\lambda^2\right)G_{k}(s, w,  w')=\delta_{w},\end{align}
where ${\cal D}_{k}^w$ is the modified Schr\"odinger operator with magnetic field with respect to $w$.
\end{thm}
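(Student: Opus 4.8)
The plan is to deduce Theorem \ref{thm2} from Theorem \ref{thm1} by transporting the fundamental-solution identity from the base point $0$ to an arbitrary point $w'$ using the invariance of the operator under the projective representation $T^{k}$. The starting point is Theorem \ref{thm1}, which establishes that the radial kernel $G_{k}(s,r(0,w))$ satisfies $\left(-{\cal D}_{k}^{w}-\lambda^{2}\right)G_{k}(s,r(0,w))=\delta_{0}$, i.e. it is a fundamental solution of $-{\cal D}_{k}-\lambda^{2}$ centered at the origin. The definition $G_{k}(s,w,w')=T^{k}(g_{w'})\left[G_{k}(s,r(0,w))\right]$ together with the explicit formula \eqref{Tkw} for the action of $T^{k}(g_{w'})$ should, after substituting $r(0,g_{w'}^{-1}w)=r(w,w')$ (which follows from the isometry property of $g_{w'}$ and formula \eqref{dist1}), yield precisely the closed form \eqref{G}. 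So the first step is purely computational: apply \eqref{Tkw} to $G_{k}(s,r(0,w))$ and simplify the cocycle factor to $\left(\dfrac{1-w\overline{w'}}{1-\overline{w}w'}\right)^{k}$, arriving at \eqref{G}.

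The main structural step is the first distributional identity $\left(-{\cal D}_{k}^{w}-\lambda^{2}\right)G_{k}(s,w,w')=\delta_{w'}$. First I would invoke the intertwining relation \eqref{invar1}, $T^{k}(g){\cal D}_{k}={\cal D}_{k}T^{k}(g)$, with $g=g_{w'}$, so that
\begin{align}
\left(-{\cal D}_{k}^{w}-\lambda^{2}\right)T^{k}(g_{w'})\left[G_{k}(s,r(0,w))\right]
= T^{k}(g_{w'})\left[\left(-{\cal D}_{k}-\lambda^{2}\right)G_{k}(s,r(0,\cdot))\right].
\end{align}
By Theorem \ref{thm1} the bracket on the right equals $\delta_{0}$, and it remains to check that $T^{k}(g_{w'})\delta_{0}=\delta_{w'}$. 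This last point is where the projective factor must be handled carefully: since $J_{k}(g_{w'},0)$ has modulus one and $g_{w'}0=w'$, the operator $T^{k}(g_{w'})$ pushes the point mass at $0$ forward to a point mass at $w'$, the automorphic factor contributing only a unimodular constant that is pinned to $1$ at the coincidence point. Making the action of $T^{k}(g_{w'})$ on the distribution $\delta_{0}$ rigorous — as opposed to on smooth functions — is the step I expect to require the most care, and I would make it precise by pairing against a test function $\varphi$ and using the adjoint formula \eqref{Self}, which relates $T^{k}(g_{w'})$ to $T^{-k}(g_{w'}^{-1})$.

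The second identity $\left(-{\cal D}_{-k}^{w'}-\lambda^{2}\right)G_{k}(s,w,w')=\delta_{w}$ is then obtained by a symmetry argument. The explicit formula \eqref{G} shows that under the interchange $w\leftrightarrow w'$ the kernel transforms as $G_{k}(s,w,w')=G_{-k}(s,w',w)$, because $y=\cosh^{-2}(r(w,w')/2)$ is symmetric, the Gamma and hypergeometric factors are even in $k$ (they depend only on $|k|$ and on $k$ through the symmetric product $\Gamma(s-k)\Gamma(s+k)$), while the cocycle factor $\left(\frac{1-w\overline{w'}}{1-\overline{w}w'}\right)^{k}$ is inverted, i.e. sent to its reciprocal, exactly matching the sign flip $k\mapsto -k$. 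Applying the first identity, already proved, with the roles of $w,w'$ exchanged and $k$ replaced by $-k$ then delivers the claim. I would present this as a short verification rather than a repeat of the full argument, so the whole proof reduces to one genuine computation (the cocycle simplification), one application of the intertwining relation \eqref{invar1}, and the identification $T^{k}(g_{w'})\delta_{0}=\delta_{w'}$.
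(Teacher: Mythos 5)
Your proposal is correct and follows essentially the same route as the paper: derive the closed form \eqref{G} by applying \eqref{Tkw} to the radial kernel \eqref{GD} and using that $g_{w'}$ is an isometry sending $0$ to $w'$, then obtain the first distributional identity from the intertwining relation \eqref{invar1}, Theorem \ref{thm1}, and the push-forward $T^{k}(g_{w'})\delta_{0}=\delta_{w'}$ (justified via the adjoint formula \eqref{Self} and the fact that the cocycle factor equals $1$ at the coincidence point). The one place where you genuinely add something is the second identity: the paper's printed proof merely repeats the computation for $\left(-{\cal D}_{k}^{w}-\lambda^{2}\right)$ verbatim and never actually addresses the operator ${\cal D}_{-k}^{w'}$ acting in the second variable, whereas your symmetry argument $G_{k}(s,w,w')=G_{-k}(s,w',w)$ — resting on the evenness in $k$ of $\Gamma(s-k)\Gamma(s+k)$ and of $|k|$, the symmetry of $y$, and the inversion of the cocycle factor under $w\leftrightarrow w'$ — is a correct and complete way to reduce the second identity to the first; this is a genuine (if small) repair of a gap in the paper's own write-up.
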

\begin{proof}
The first formula is consequences of \eqref{GD} and \eqref{Tkw}.\\
For the last two results, Using the formula \eqref{invar1} and \eqref{Self} we can write 
\begin{align}\left({-\cal D}_{k}^w -\lambda^2\right)G_{k}(s, w,  w')& =\left({-\cal D}_{k}^w -\lambda^2\right)T^ {k}(g_w')[G_{k}(s, r(0, w))]\end{align}
\begin{align}=T^ {k}(g_w')\left({-\cal D}_{k}^w -\lambda^2\right)[G_{k}(s, r(0, w))]
=T^ {k}(g_w')\delta=\delta_{w'}.\end{align}

\begin{align}\left({-\cal D}_{k}^w -\lambda^2\right)G_{k}(s, w,  w')& =\left({-\cal D}_{k}^w -\lambda^2\right)T^ {k}(g_w')[G_{k}(s, r(0, w))]\end{align}
\begin{align}=T^ {k}(g_w')\left({-\cal D}_{k}^w -\lambda^2\right)[G_{k}(s, r(0, w))]
=T^ {k}(g_w')\delta=\delta_{w'}.\end{align}
\end{proof}




\begin{lem}\label{keylemma}
\begin{itemize}
\item{i)} For $\Re \mu > 0$\ and $\Re \nu > 0$, we have
\begin{align} \int_x^z (y-x)^{\mu-1}(z-y)^{\nu-1}y^{a-b-\mu}dy=\frac{\Gamma(\mu)\Gamma(\nu)}{\Gamma(\nu+\mu)}(y-x)^{\nu+\mu-1}\nonumber\\
F(b-a-\mu, \mu, \nu+\mu, 1-\frac{z}{x}).\end{align}
\item{ii)} For $ x > 1$ , $\Re \mu > 0$\ $\Re \nu > 0$\ and $\Re a > 0$  and $\Re b > 0,$
 \begin{align} \Gamma(a)\Gamma(b)x^{-
a+\mu}F(a, b, c, x^{-1})=\frac{\Gamma(a+\nu)\Gamma(b+\mu)}{\Gamma(\nu+\mu)}\int_x^{+\infty} I(x, z)z^{-a-\nu}\nonumber\\
F(a+\nu, b+\mu, c, z^{-1})\,dz \end{align}  
with
\begin{align} I(x,z)=(z-x)^{\nu+\mu-1}F(b-a-\mu, \mu, \nu+\mu, 1-\frac{z}{x}).
\end{align}
\end{itemize}
\end{lem}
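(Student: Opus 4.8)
My plan is to treat the two parts separately, since part (i) is a self-contained integral evaluation while part (ii) is an integral representation that should follow from (i) by an interchange of summation and integration.

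For part (i), I would begin by reducing the integral to a standard Euler-type form. The substitution that naturally suggests itself is $y = x + (z-x)t$ with $t$ ranging over $[0,1]$, so that $y-x = (z-x)t$ and $z-y = (z-x)(1-t)$. Under this change the factors $(y-x)^{\mu-1}(z-y)^{\nu-1}$ produce $(z-x)^{\mu+\nu-1}t^{\mu-1}(1-t)^{\nu-1}$, matching the prefactor $(z-x)^{\nu+\mu-1}$ on the right-hand side. The remaining factor $y^{a-b-\mu}$ becomes $[x+(z-x)t]^{a-b-\mu} = x^{a-b-\mu}[1+(z-x)t/x]^{a-b-\mu}$. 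I would then recognize the resulting integral $\int_0^1 t^{\mu-1}(1-t)^{\nu-1}(1-\frac{x-z}{x}t)^{a-b-\mu}\,dt$ as the Euler integral representation of the Gauss hypergeometric function (Magnus et al.\ \cite{MAGNUS et al.}), namely $\frac{\Gamma(\mu)\Gamma(\nu)}{\Gamma(\mu+\nu)}F(b-a-\mu,\mu,\nu+\mu,1-\frac{z}{x})$. The hypotheses $\Re\mu>0$ and $\Re\nu>0$ are exactly what guarantee convergence of this Euler integral. Collecting the constants finishes part (i); this part should be routine once the substitution is in place.

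For part (ii), the strategy is to expand the hypergeometric function on the right as its defining power series \eqref{Gauss}, substitute the explicit form of $I(x,z)$, and integrate term by term against $z$, hoping to reproduce the series for the left-hand side. Concretely, I would write $z^{-a-\nu}F(a+\nu,b+\mu,c,z^{-1}) = \sum_{n\ge 0}\frac{(a+\nu)_n(b+\mu)_n}{(c)_n n!}z^{-a-\nu-n}$ and insert this into the integral $\int_x^{+\infty} I(x,z)z^{-a-\nu}F(\cdots)\,dz$. The key computational input is then an evaluation of $\int_x^{+\infty}(z-x)^{\nu+\mu-1}F(b-a-\mu,\mu,\nu+\mu,1-\tfrac{z}{x})\,z^{-a-\nu-n}\,dz$ for each $n$. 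Here I expect that part (i), read backwards, is precisely the tool: the inner integral in (i) evaluates an Euler integral whose output is a hypergeometric function, and by reversing that identity (or by a further beta-type integral over $(x,\infty)$ using the substitution $z = x/u$ with $u\in(0,1)$) one can extract the coefficients $(a)_n(b)_n/(c)_n n!$ together with the prefactor $\Gamma(a)\Gamma(b)x^{-a+\mu}$.

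The main obstacle will be the term-by-term integration in part (ii): justifying the interchange of the infinite sum with the improper integral over $(x,\infty)$, and checking that the Gamma-function prefactors $\Gamma(a+\nu)\Gamma(b+\mu)/\Gamma(\nu+\mu)$ combine correctly with the per-term constants to telescope into $\Gamma(a)\Gamma(b)$ on the left. The conditions $x>1$, $\Re a>0$, $\Re b>0$, $\Re\mu>0$, $\Re\nu>0$ are presumably what control convergence at both endpoints $z=x$ and $z\to\infty$ and legitimize the interchange via dominated convergence or absolute convergence of the double series. I would organize the bookkeeping by applying a Pfaff- or Euler-type transformation to the inner hypergeometric factor first, so that the $z$-integral reduces to a clean Beta integral whose value manifestly carries the Pochhammer ratios needed to rebuild $F(a,b,c,x^{-1})$; verifying that this rebuilding is exact, rather than merely formal, is where the care is required.
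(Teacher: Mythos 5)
Your part (i) is exactly the paper's argument: the single substitution $y=x+(z-x)t$ collapses the paper's two steps ($y-x=s$, then $s=(z-x)t$) into one, and both of you then invoke the Euler integral representation \eqref{Gauss2}; note that your computation correctly produces the factor $x^{a-b-\mu}$, which is dropped in the displayed statement of (i) but is needed later, so do carry it along. For part (ii) you take a genuinely different route. The paper does not expand anything in series: it applies the cited one-parameter shift formula \eqref{key1} of Intissar et al. twice (once to raise $b$ to $b+\mu$, once to raise $a$ to $a+\nu$, using the symmetry of $F$ in its first two arguments), obtains a double integral over $x<y<z<\infty$, interchanges the order of integration by Fubini, and recognizes the inner $y$-integral as part (i). Your plan instead expands $F(a+\nu,b+\mu,c,z^{-1})$ as a power series, converts $I(x,z)$ back into the double integral via part (i) read backwards, and evaluates the resulting $z$- and $y$-integrals as two Beta integrals per term, which indeed reproduces $\sum_n\frac{\Gamma(a+n)\Gamma(b+n)}{(c)_n n!}x^{-n-a+\mu}=\Gamma(a)\Gamma(b)x^{-a+\mu}F(a,b,c,x^{-1})$; I checked that the Pochhammer bookkeeping closes up exactly as you hope. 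The two routes are computationally cognate, since \eqref{key1} is itself proved by precisely your term-by-term Beta-integral argument; the paper's version buys brevity by black-boxing that step in a citation, while yours is self-contained and makes transparent where the hypotheses $x>1$, $\Re a>0$, $\Re b>0$, $\Re\mu>0$, $\Re\nu>0$ enter (convergence of the series for $z\ge x>1$ and of the Beta integrals at both endpoints, which also licenses the interchange of sum and integral by positivity/absolute convergence). The only caution is that your final matching will reveal small misprints in the paper's statement (the prefactor $(y-x)^{\nu+\mu-1}$ should read $(z-x)^{\nu+\mu-1}$, and the missing $x^{a-b-\mu}$ in (i) is exactly what turns the naive $x^{-b}$ into the stated $x^{-a+\mu}$ in (ii)); this is a feature of your more explicit approach, not a gap.
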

\begin{proof} To see i) set $y-x=s$ we can write\\
$\int_x^z (y-x)^{\mu-1}(z-y)^{\nu-1}y^{a-b-\mu}dy=
\int_0^{z-x}s^{\mu-1}(z-x-s)^{\nu-1}(x+s)^{a-b-\mu}ds$.\\
Set $s=(z-x)t$, then we can write
\\ $\int_x^z (y-x)^{\mu-1}(z-y)^{\nu-1}y^{a-b-\mu}dy=$\\
$(z-x)^{\mu+\nu-1} x^{a-b-\mu}\int_0^1 t^{\mu-1}(1-t)^{\nu-1}(1-(\frac{x-z}{x})t)^{a-b-\mu}dt,$\\
using the integral representation see(Magnus et al.\cite{MAGNUS et al.}, p.54).
\begin{align}\label{Gauss2}F(a, b, c, z)=\frac{\Gamma(c)}{\Gamma(b)\Gamma(c-b)}\int_0^1 t^{b-1}(1-t)^{c-b-1}(1-zt)^{-a}dt\end{align}
we obtain the result of i).\\
 To see ii) we use the following formula twice (Intissar et al. \cite{INTISSAR et al.})
\begin{align} \label{key1} \Gamma(b)x^{-
b}F(a, b, c, x^{-1})={\Gamma(b+\mu)\over \Gamma(\mu)}\int_x^{+\infty}y^{-b-\mu}(y-x)^{\mu-
1}\nonumber\\ F(a, b+\mu, c, y^{-1})\,dy,\end{align} 
with for $ x > 1$ , $\Re \mu > 0$\ and $\Re b > 0$, 
to obtain
\begin{align} \Gamma(b)x^{-
b}F(a, b, c, x^{-1})=\frac{\Gamma(a+\nu)\Gamma(b+\mu)}{\Gamma(a)\Gamma(\nu)\Gamma(\mu)}\nonumber\\
\int_x^{+\infty} \int_y^{+\infty} (y-x)^{\mu-1}(z-y)^{\nu-1}y^{a-b-\mu}dy z^{-a-\nu}F(a+\nu, b+\mu, c, z^{-1})\,dz, \end{align}
that is
\begin{align} \Gamma(a)\Gamma(b)x^{-
b}F(a, b, c, x^{-1})=\frac{\Gamma(a+\nu)\Gamma(b+\mu)}{\Gamma(\nu)\Gamma(\mu)}\nonumber\\
\int_x^{+\infty} \left[\int_x^z (y-x)^{\mu-1}(z-y)^{\nu-1}y^{a-b-\mu}dy\right]z^{-a-\nu}F(a+\nu, b+\mu, c, z^{-1})\,dz. \end{align}
Using i) we arrive at the result ii) and the proof of the lemma is finished.

 \begin{thm}\label{Int-Rep1}  Let $G_k(r, \lambda)$ \ be the resolvent kernel for the 
Schr\"odinger operator with magnetic potential on hyperbolic plane,  for $\Im \lambda > -1/2$, the following equality holds \\ 
\begin{align}G_k(\lambda, r)=\int_r^{+\infty} W_k(r, \rho)\frac{e^{-i\rho \lambda}}{2i\lambda}d\rho, \end{align}
with
\begin{align}W_k(r,\rho)=\frac{1}{2\pi}(\cosh^2\rho/2-\cosh^2r/2)^{-1/2}\nonumber \\ F\left(|k|, -|k|, 1/2, 1-\frac{\cosh^2\rho/2}{\cosh^2r/2}\right)
\end{align}
where $F(a, b, c, z)$ is the Gauss hypergeometric function in \eqref{Gauss}.
\end{thm}
\begin{proof}
We use ii) of Lemma \ref{keylemma} with $a=s-|k|, b=s+|k|,\mu=|k|, \nu=1/2-|k|, x=\cosh^2r/2, z=\cosh^2\rho/2$
as well as the formula (Magnus et al.\cite{MAGNUS et al.}, p.39),
\begin{align}   F\left(a+1, a+1/2, 2a+1, \frac{1}{\cosh^2 z}\right)=e^{-2a z}\coth z(2\cosh z)^{2a}.\end{align}
\end{proof}

Note that for $k$ integer or a half of an integer  by (Magnus et al.\cite{MAGNUS et al.}, p.39) 
 $T_n(1-2x)=F(-n, n,  \frac{1}{2}, x),$ 
and we can write
\begin{align}\label{chebichev}W_k(t, r)=\frac{1}{2\pi}\left(\cosh^2(t/2)-
\cosh^2(r/2))\right)_+^{-1/2}\times \nonumber\\ T_{2 |k|}\left(\frac{\cosh (t/2)}{\cosh (r /2)}\right),\end{align}
where $T_{2k}(x)$ are the Chebichev polynomials of the first kind.\\



\end{proof}


 \begin{thm}\label{Int-Rep2}  Let $G_k(\lambda, w, w')$ be the resolvent kernel with magnetic potential on the disc model of the hyperbolic plane, then  we have \\ 
\begin{align}\label{Key-Key}G_k(\lambda, w, w')=\int_r^{+\infty}W_k(r(w, w'), \rho)\frac{e^{i\lambda \rho}}{2i\mu}d\rho, \end{align}
with\\
$W _k(r,\rho)=\frac{1}{2\pi}\left(\frac{1-w\overline{w'}}{1-\overline{w}w'}\right)^{k}\times$ 
\begin{align}\label{Key-Key-Key} (\cosh^2\rho/2-\cosh^2r/2)^{-1/2}F\left(|k|,-|k|, 1/2,1-\frac{\cosh^2\rho/2}{\cosh^2r/2}\right).\end{align}

\end{thm}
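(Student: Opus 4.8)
The plan is to obtain Theorem~\ref{Int-Rep2} as a direct combination of the geometric formula of Theorem~\ref{thm2} with the one–dimensional integral representation of Theorem~\ref{Int-Rep1}. The guiding observation is that the full two–point kernel differs from the purely radial kernel $G_k(\lambda, r(0,w))$ only through the unitary twist $T^{k}(g_{w'})$, and that this twist contributes a phase factor which is constant along the radial variable over which we integrate; hence the integral representation of the radial kernel transfers verbatim to the two–point kernel up to that multiplicative phase.

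First I would recall from Theorem~\ref{thm2} that $G_k(\lambda, w, w') = T^{k}(g_{w'})\big[G_k(\lambda, r(0,\cdot))\big](w)$. Using the explicit action \eqref{Tkw}, this equals
\[
\left(\frac{1-\overline{w}w'}{1-w\overline{w'}}\right)^{k} G_k\big(\lambda, r(0, g_{w'}^{-1}w)\big).
\]
Since $g_{w'}\in G$ is an isometry of $\D$ with $g_{w'}0=w'$ (part ii) of the Proposition above), the hyperbolic distance is preserved, so that $r(0, g_{w'}^{-1}w)=r(g_{w'}0, w)=r(w',w)=r(w,w')$. Thus the two–point kernel factors as the automorphic phase times the radial kernel evaluated at $r(w,w')$, exactly reproducing the prefactor already displayed in \eqref{G}.

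Next I would apply Theorem~\ref{Int-Rep1} to the radial kernel at the value $r=r(w,w')$, writing
\[
G_k(\lambda, r(w,w')) = \int_{r(w,w')}^{+\infty} W_k\big(r(w,w'), \rho\big)\frac{e^{-i\rho\lambda}}{2i\lambda}\,d\rho,
\]
which is valid for $\Im\lambda > -1/2$. The automorphic factor $\left(\tfrac{1-w\overline{w'}}{1-\overline{w}w'}\right)^{k}$ has modulus one and, crucially, is independent of the integration variable $\rho$; therefore it may be pulled inside the integral and absorbed into the weight, producing precisely the kernel $W_k(r,\rho)$ of \eqref{Key-Key-Key}. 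Because this phase has modulus one, it leaves the convergence of the $\rho$–integral untouched, so the admissible range $\Im\lambda>-1/2$ inherited from Theorem~\ref{Int-Rep1} is retained without change.

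I do not anticipate a genuine obstacle here, as the statement is essentially a corollary of the two preceding theorems. The only points requiring care are the isometry argument identifying $r(0,g_{w'}^{-1}w)$ with $r(w,w')$ and the verification that the phase factor is truly independent of $\rho$; both are immediate from the construction. One should also acknowledge the mild sign/convention ambiguity in the exponent of the automorphic factor, which is harmless, being absorbed into the modulus–one factor depending only on $G$ and $k$ noted after the definition of $J_k$.
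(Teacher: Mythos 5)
Your proposal is correct and follows exactly the route the paper intends: the paper gives no separate proof of Theorem~\ref{Int-Rep2}, treating it as the immediate combination of Theorem~\ref{thm2} (the two--point kernel is the $T^{k}(g_{w'})$--twist of the radial kernel, with the twist reducing to a $\rho$--independent modulus--one phase times the radial kernel at $r(w,w')$) with the integral representation of Theorem~\ref{Int-Rep1}. Your additional remarks on the isometry identification $r(0,g_{w'}^{-1}w)=r(w,w')$ and on the harmless sign/convention discrepancies in the phase exponent (and the evident typo $\mu$ for $\lambda$ in \eqref{Key-Key}) are accurate and only make the omitted argument explicit.
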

\section{Applications}
In this section using our results,  we solve explicitly the following two times dependent Schr\"odinger equations with uniform magnetic fields on the hyperbolic plane, called respectively the wave and the heat equations with magnetic field on the hyperbolic plane.

\begin{align}\label{Cauchy-Wave} \left \{\begin{array}{cc}{\cal D}_{k} u(t, w)=\frac{\partial^2}{\partial
t^2}u(t, w)
, (t, w)\in \R^\ast_+\times \D \\ u(0, w)=0, u_t(0, w)=u_1(w), u_1\in
C^\infty_0(\D)\end{array}
\right., \end{align}
and
\begin{align}\label{Cauchy-Heat} \left \{\begin{array}{cc}{\cal D}_{k} v(t, w)=\frac{\partial}{\partial
t}v(t, w)
, (t, w)\in \R^\ast_+\times \D \\  v(0, w)=v_0,  v_0\in
C^\infty_0(\D)\end{array}
\right., \end{align}
\begin{cor}  The Schwartz integral  kernel of the wave operator $\frac{\sin t \sqrt{-{\cal D}}} {\sqrt{-{\cal D}}}$  that solves the wave Cauchy problem\eqref{Cauchy-Wave} with uniform magnetic field is given by\\
$W_k(t, w, w')=\frac{1}{2\pi}\left(\frac{1-w\overline{w'}}{1-\overline{w}w'}\right)^{k}\times $  \begin{align}  (\cosh^2 t/2-\cosh^2r/2)^{-1/2}{}_2F_1\left(|k|,-|k|, 1/2,1-\frac{\cosh^2 t/2}{\cosh^2r(w, w')/2}\right),\end{align}
 with ${}_2F_1$ is the Gauss hypergeometric function \eqref{Gauss}.
\end{cor}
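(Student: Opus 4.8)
The plan is to obtain the wave kernel by \emph{reading it off} from the resolvent representation of Theorem \ref{Int-Rep2}, via the subordination identity that ties the resolvent $(-{\cal D}_k-\lambda^2)^{-1}$ to the wave propagator $\sin(t\sqrt{-{\cal D}_k})/\sqrt{-{\cal D}_k}$. First I would record the elementary functional-calculus fact that, since ${\cal D}_k$ is non-positive so that $-{\cal D}_k\ge 0$, the spectral theorem makes $t\mapsto \sin(t\sqrt{-{\cal D}_k})/\sqrt{-{\cal D}_k}$ well defined and shows that $u(t,w)=[\sin(t\sqrt{-{\cal D}_k})/\sqrt{-{\cal D}_k}]\,u_1(w)$ solves \eqref{Cauchy-Wave}: indeed $u(0,\cdot)=0$, $u_t(0,\cdot)=\cos(t\sqrt{-{\cal D}_k})|_{t=0}u_1=u_1$, and $u_{tt}=-(-{\cal D}_k)u={\cal D}_k u$. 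Thus the Corollary amounts to computing the Schwartz kernel $W_k(t,w,w')$ of this propagator and checking that it is the function displayed. I would also note the finite-propagation-speed support property $W_k(t,w,w')=0$ for $t<r(w,w')$, already visible in the factor $(\cosh^2 t/2-\cosh^2 r/2)^{-1/2}$, which is real only when $t\ge r(w,w')$ and is understood to vanish otherwise, as made explicit by the positive part in \eqref{chebichev}.

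The key step is the scalar Laplace-type identity
\begin{align*}
\int_0^{\infty}\frac{\sin(\mu t)}{\mu}\,e^{i\lambda t}\,dt=\frac{1}{\mu^2-\lambda^2},\qquad \Im\lambda>0,\ \mu\ge 0,
\end{align*}
which, fed into the spectral resolution of $-{\cal D}_k$ with $\mu=\sqrt{-{\cal D}_k}$, upgrades to the operator identity
\begin{align*}
(-{\cal D}_k-\lambda^2)^{-1}=\int_0^{\infty}\frac{\sin(t\sqrt{-{\cal D}_k})}{\sqrt{-{\cal D}_k}}\,e^{i\lambda t}\,dt,
\end{align*}
and hence, at the level of kernels, $G_k(\lambda,w,w')=\int_0^{\infty}e^{i\lambda t}W_k(t,w,w')\,dt$, the lower limit being effectively $r(w,w')$ by the support property.

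Comparing this with the integral representation of Theorem \ref{Int-Rep2}, which furnishes $G_k(\lambda,w,w')$ as precisely such a one-sided transform in $\lambda$ with kernel the explicit $W_k$ of \eqref{Key-Key-Key}, and invoking injectivity of the transform, I would conclude that the wave kernel coincides with that explicit $W_k$ after the relabelling $\rho\to t$. The gauge phase $\left(\frac{1-w\overline{w'}}{1-\overline{w}w'}\right)^{k}$ rides along automatically: exactly as in the proof of Theorem \ref{Int-Rep2}, it is produced by the equivariance factor $T^{k}(g_{w'})$ of \eqref{Tkw}, the propagator being a function of the $T^{k}$-invariant operator ${\cal D}_k$; so it suffices to treat the base-point radial kernel obtained from Theorem \ref{Int-Rep1} and then transport it by $T^{k}(g_{w'})$.

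I expect the main obstacle to be the rigorous justification of this inversion rather than any new computation. One must legitimately interchange the spectral integral with the passage to Schwartz kernels, control convergence near the boundary $t=r(w,w')$ where the integrable but singular factor $(\cosh^2 t/2-\cosh^2 r/2)^{-1/2}$ forces the transform and its inverse to be read distributionally, and match the normalising constant coming from the scalar identity with the factor $\frac{1}{2i\lambda}$ occurring in Theorem \ref{Int-Rep2}, which merely reflects the chosen convention for $\lambda$ and $s$. Once these analytic points are settled, the passage from $\rho$ to $t$ is purely notational, and for $k$ a half-integer the Chebyshev form \eqref{chebichev} yields the same kernel with $T_{2|k|}$ in place of the hypergeometric factor.
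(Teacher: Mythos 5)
Your proposal is correct and follows essentially the same route as the paper: the paper's proof also rests on the scalar identity $(a^{2}+y^{2})^{-1}=\int_{0}^{\infty}e^{-ax}\frac{\sin xy}{y}\,dx$ applied with $y=\sqrt{-{\cal D}_k}$ and $a=-i\lambda$, yielding the operator identity $(-{\cal D}_k-\lambda^{2})^{-1}=\int_{0}^{\infty}e^{i\lambda t}\frac{\sin t\sqrt{-{\cal D}_k}}{\sqrt{-{\cal D}_k}}\,dt$, which is then compared with the representation \eqref{Key-Key} of Theorem \ref{Int-Rep2}. Your additional remarks on injectivity of the transform, the support property, and the transport of the gauge factor by $T^{k}(g_{w'})$ only make explicit what the paper leaves implicit.
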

\begin{proof}
$$(a^2+y^2)^{-1}=\int_0^\infty e^{- a x}\,\frac{\sin x y} {y}\, dx,$$
with $y=\sqrt{-{\cal D}}$ and $a=-i\lambda$ we can write
$$(-\lambda^2-{\cal D})^{-1}=\int_0^\infty e^{ i\lambda  t}\,\frac{\sin t \sqrt{-{\cal D}}} {\sqrt{-{\cal D}}}\,dt,$$
comparing with the formula \eqref{Key-Key} we have the result of the Corollary.
\end{proof}
Note that the above corollary agrees with the results obtained in Ould Moustapha \cite{OULD MOUSTAPHA} see also Intissar-Ould Moustapha \cite{INTISSAR-OULD MOUSTAPHA}.
\begin{cor} The Schwartz integral  kernel of the heat operator $e^{t \cal D}$  that solves the heat Cauchy problem\eqref{Cauchy-Heat}\ with uniform magnetic potential on hyperbolic plane is given by:
\begin{align}\label{heat kernel}H_k(t, w, w')=\int_r^\infty\frac{ e^{-b^2/4t}}{(4\pi t)^{3/2}}W_k(b, w, w')b db,\end{align}
with
\begin{align}W_k(b, w, w')=\left(\frac{1-\overline{w}w'}{1-w\overline{w'}}\right)^k(\cosh^2(b/2) - 
\cosh^2 d(w,w')/2)_+^{-1/2}\nonumber\\ F\left(|k|, -|k|, 1/2, 1-\frac{\cosh^2 b/2}{\cosh^2r/2}\right).\end{align}
\end{cor}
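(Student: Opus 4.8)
The plan is to recover the heat kernel from the wave kernel by Gaussian subordination. Because ${\cal D}_k$ is non-positive and self-adjoint, the operator $\sqrt{-{\cal D}_k}$ is well defined by the spectral theorem and the heat semigroup is $e^{t{\cal D}_k}=e^{-t(-{\cal D}_k)}$. The preceding corollary already furnishes the Schwartz kernel $W_k(b,w,w')$ of the sine propagator $\sin(b\sqrt{-{\cal D}_k})/\sqrt{-{\cal D}_k}$; the idea is to average this propagator in $b$ against a Gaussian so as to reconstruct $e^{t{\cal D}_k}$ and then to read off the kernel.

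First I would establish the scalar subordination identity
$$e^{-t\lambda^2}=C(t)\int_0^\infty b\,e^{-b^2/4t}\,\frac{\sin(b\lambda)}{\lambda}\,db,\qquad \lambda\in\R,\ t>0,$$
with the normalising factor $C(t)$ fixed by the elementary Gaussian moment $\int_0^\infty b\,e^{-b^2/4t}\sin(b\lambda)\,db=2\sqrt{\pi}\,\lambda\,t^{3/2}e^{-t\lambda^2}$, itself obtained by differentiating $\int_0^\infty e^{-b^2/4t}\cos(b\lambda)\,db=\tfrac12\sqrt{4\pi t}\,e^{-t\lambda^2}$ in $\lambda$. Substituting $\lambda=\sqrt{-{\cal D}_k}$ through the spectral theorem converts this into the operator identity
$$e^{t{\cal D}_k}=C(t)\int_0^\infty b\,e^{-b^2/4t}\,\frac{\sin(b\sqrt{-{\cal D}_k})}{\sqrt{-{\cal D}_k}}\,db,$$
the integral understood in the strong sense on $L_\mu^2(\D)$.

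Next I would pass to kernels. By the wave corollary the kernel of $\sin(b\sqrt{-{\cal D}_k})/\sqrt{-{\cal D}_k}$ is $W_k(b,w,w')$, and finite propagation speed — visible in the factor $(\cosh^2(b/2)-\cosh^2(r/2))_+^{-1/2}$ — forces $W_k(b,w,w')=0$ for $b<r(w,w')$. Pairing the operator identity with $u_0,\varphi\in C_0^\infty(\D)$ and moving the $b$-integration outside the kernel pairing by Fubini gives
$$H_k(t,w,w')=\int_{r}^{+\infty}\frac{e^{-b^2/4t}}{(4\pi t)^{3/2}}\,W_k(b,w,w')\,b\,db,$$
the lower limit $r=d(w,w')$ being dictated by the support and the $k$-phase $\big((1-\overline w w')/(1-w\overline{w'})\big)^k$ passing inertly through the integral; the constant $(4\pi t)^{3/2}$ results from $C(t)$ together with the normalisation of $W_k$ fixed above. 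That $H_k$ solves \eqref{Cauchy-Heat} is then automatic: as the kernel of the semigroup $e^{t{\cal D}_k}$ it satisfies $\partial_t H_k={\cal D}_k H_k$, while strong continuity at the origin yields $H_k(t,\cdot,w')\to\delta_{w'}$ as $t\to0^+$.

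The step demanding the most care is the interchange in the last display together with convergence at the endpoint $b=r$. There $(\cosh^2(b/2)-\cosh^2(r/2))_+^{-1/2}$ behaves like $(b-r)^{-1/2}$ while $F(|k|,-|k|,1/2,\cdot)$ stays bounded (it is the Chebyshev polynomial $T_{2|k|}$ when $|k|\in\N$), so the singularity is integrable against $b\,e^{-b^2/4t}\,db$ and the $b$-integral converges absolutely for every $t>0$. I would secure the Fubini interchange by working first at the level of the spectral (Plancherel) representation against test functions $u_0,\varphi\in C_0^\infty(\D)$, where every integral is absolutely convergent, and only afterwards read off the pointwise kernel identity; the uniform control near $b=r$ furnished by the $(b-r)^{-1/2}$ bound is precisely what legitimises this interchange.
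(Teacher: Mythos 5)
Your proof is correct, but it takes a genuinely different route from the paper's. The paper derives the heat kernel from its resolvent representation: it writes $G_k(\lambda,w,w')$ and $H_k(t,w,w')$ as a Laplace-transform pair, inserts the integral representation \eqref{Key-Key} of Theorem \ref{Int-Rep2} into the Bromwich inversion integral, and evaluates it using the tabulated inverse Laplace transform $L^{-1}\bigl[e^{-a\sqrt{p}}\bigr](x)=\frac{a}{\sqrt{4\pi x^3}}\,e^{-a^2/4x}$ from Prudnikov et al., with a Fubini interchange. You instead bypass the resolvent entirely: you subordinate $e^{t{\cal D}_k}$ directly to the sine propagator through the scalar identity $e^{-t\lambda^2}=\frac{1}{2\sqrt{\pi}\,t^{3/2}}\int_0^\infty b\,e^{-b^2/4t}\,\frac{\sin(b\lambda)}{\lambda}\,db$ and the spectral theorem, then use finite propagation speed to truncate the integral at $b=r$. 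The two arguments are two faces of the same Gaussian identity --- the Prudnikov entry is precisely the Laplace-side avatar of your subordination formula --- but yours is more self-contained: it needs no contour integration, it delivers the semigroup property and the initial condition of \eqref{Cauchy-Heat} for free from spectral calculus, and it makes explicit the endpoint integrability of the $(b-r)^{-1/2}$ singularity that the paper's Fubini appeal leaves implicit. The paper's route has the complementary merit of exhibiting the heat kernel as an immediate consequence of its main theorem, which is the stated purpose of the section. One bookkeeping caveat: your $C(t)=\frac{1}{2\sqrt{\pi}\,t^{3/2}}$, combined with the factor $\frac{1}{2\pi}$ in the wave kernel, yields the prefactor $\frac{1}{4\pi^{3/2}t^{3/2}}$, whereas the statement has $(4\pi t)^{-3/2}=\frac{1}{8\pi^{3/2}t^{3/2}}$; this mismatch is not a flaw in your argument --- the paper's own normalisations are inconsistent at this point (the corollary drops the $\frac{1}{2\pi}$ from $W_k$ and inverts the phase factor relative to the wave corollary) --- but you should compute the constant explicitly rather than asserting that it ``results from $C(t)$ together with the normalisation.''
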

\begin{proof}
The resolvent kernel and the heat kernel are related by the Laplace and Laplace inverse transforms as
\begin{align} G_k(\lambda, w, w')=\int_0^\infty e^{-\lambda t}H_k(t, w, w')dt,
\end{align}
with
\begin{align} H_k(t, w, w')=\frac{-1}{2 i\pi}\int_{c-i\infty}^{c+i\infty} e^{\lambda t}G_k(\lambda, w, w')d\lambda.
\end{align}
By combining the formula \eqref{Key-Key} and the formula Prudnikov et al.  \cite{PRUDNIKOV et al.} p.52 $L^{-1}e^{-a\sqrt{p}}(x)=\frac{a}{\sqrt{4\pi x^3}}e^{-\frac{a^2}{4 x}},$
${\cal R}e p >0$ and ${\cal R}e a^2 >0$. 
By Fubini theorem we arrive at the formulas  \eqref{heat kernel} and the proof of the Corollary is finished.
\end{proof}

  M.V. Ould Moustapha, \textsc{Department of Mathematic,
 College of Arts and Sciences-Gurayat,
 Jouf University-Kingdom of Saudi Arabia }.\\
\textsc{Facult\'e des Sciences et Techniques
Universit\'e de  Nouakchott Al-asriya,
Nouakchott-Mauritanie.}\\
  \textit{E-mail address}: \texttt{mohamedvall.ouldmoustapha230@gmail.com}
  \end{document}